\documentclass{article}
\usepackage{booktabs}
\usepackage{multirow}
\usepackage{tikz}
\usetikzlibrary{patterns}
\usepackage{enumitem}
\usepackage{amsmath,amssymb}
\usepackage{pgfplots}
\pgfplotsset{compat=newest}
\usepackage{amsthm}
\newtheorem{theorem}{Theorem}

\newtheorem{corollary}{Corollary}

\newtheorem{proposition}[theorem]{Proposition}

\usepackage{graphicx}
\usepackage{natbib}
\usepackage[margin=1in]{geometry}

\title{NDAI \thanks{This version: \today}}
\author{
  Matt Stephenson\thanks{Pantera Capital}
  \and
  Andrew Miller\thanks{Teleport, Flashbots}
  \and
  Xyn Sun\footnotemark[2]
  \and
  Bhargav Annem\thanks{Caltech}
  \and
  Rohan Parikh\thanks{Nous Research, University of Chicago Booth}
}
\date{}

\begin{document}
\maketitle

\begin{abstract}
We study a fundamental challenge in the economics of innovation: an inventor must reveal details of a new idea to secure compensation or funding, yet such disclosure risks expropriation. We present a model in which a seller (inventor) and buyer (investor) bargain over an information good under the threat of hold‐up. In the classical setting, the seller withholds disclosure to avoid misappropriation, leading to inefficiency. We show that \emph{trusted execution environments} (TEEs) combined with AI agents can mitigate and even fully eliminate this hold‐up problem. By delegating the disclosure and payment decisions to tamper‐proof programs, the seller can safely reveal the invention without risking expropriation, achieving full disclosure and an efficient ex post transfer. Moreover, even if the invention’s value exceeds a threshold that TEEs can fully secure, partial disclosure still improves outcomes compared to no disclosure. Recognizing that real AI agents are imperfect, we model “agent errors” in payments or disclosures and demonstrate that budget caps and acceptance thresholds suffice to preserve most of the efficiency gains. 

Our results imply that cryptographic or hardware‐based solutions can function as an “ironclad NDA,” substantially mitigating the fundamental disclosure–appropriation paradox first identified by Arrow (1962) and Nelson (1959). This has far‐reaching policy implications for fostering R\&D, technology transfer, and collaboration.
\end{abstract}

\section{Introduction: The Disclosure Paradox}

What if you discovered something beyond belief, but revealing it meant giving it away for free? That was the predicament of John Harrison, the humble English clockmaker who solved a riddle that had confounded luminaries like Galileo, Huygens, and Newton. His life’s work—the Marine Chronometer—was so revolutionary it was likened to finding “the Fountain of Youth, the secret of perpetual motion, or the formula for transforming lead into gold” \citep{sobel2005longitude}.

	Unfortunately for Harrison, the year was 1714, and he had no access to AI Agents or Trusted Execution Environments (``TEE''s). So when he unveiled his chronometer to claim his  reward—literally a king’s ransom—he became entangled in decades of struggle with bureaucrats who repeatedly moved the goalposts to prevent paying him his due. Now that he had revealed his invention, what incentive remained for these elites to pay this “man of simple birth” at all? And so they didn't. It took forty years—and the personal intervention of a new king, George III—before an aged and exhausted Harrison finally secured his rightful reward \citep{bennett2003travels, sobel2005longitude}.

	This story reflects a well-established tension identified by \citet{arrow1971essays} and \citet{nelson1959simple}, wherein an inventor must reveal information to capture its economic value but, by revealing it, risks losing the ability to appropriate it. This dilemma is fundamental to information goods, arising because disclosed knowledge cannot typically be undisclosed. John Harrison couldn't gain leverage on the bureaucrats by threatening to un-disclose his design. Likewise, Pythagoras could not reveal his theorem to someone and then say, "Forget everything you just heard unless you pay me." Once the knowledge was out, it was out for good. But today in light of our sci-fi present, we can re-consider this disclosure problem and progress toward a solution.
    
    We model a scenario where AI Agents represent the buyer's and seller's interests within a secure cryptographic black box, enabling them to validate and bargain over the invention without prematurely revealing it. That is, within this secure environment, John Harrison’s AI agent could disclose the properties of the chronometer to the King of England’s AI agent, who could then evaluate to its satisfaction that it solves the longitude problem. If satisfied, a smart contract automatically releases both the invention and the payment. If no agreement is reached, Harrison’s invention along with the agent interaction, is provably deleted without ever having left the secure environment. No prolonged tribunals, no need for a monarch’s personal plea—just a perfect non-disclosure agreement.\footnote{Provided, of course, that the black box is fully secure and the AI Agents can assess the invention’s performance with sufficient fidelity.} 

\subsection{Why Patents and NDAs Often Fail}
John Harrison revealed valuable information, but were it not for the eventual intervention of a king, he may have regretted it. Inventors who anticipate Harrison-type difficulties and never reveal--or don't even develop their inventions in the first place-- are preventing expropriation through withholding. But by withholding they sacrifice potential gains from trade, investment, collaboration, and so on. 

Such strategic withholding of information is a form of ex ante "enforcement" that Arrow modeled. And beyond the theory, there are studies like \citet{dushnitsky2009limitations} which observes 1{,}600 start-up ventures seeking funding and finds that “many relationships do not form because \dots the entrepreneur may be wary of disclosing an invention, fearing imitation.” 

An alternative is using legal protections—such as NDAs, patents, or trade secret law. These are attempts at ex post enforcement-- once an unauthorized use or disclosure is detected, the injured party must gather evidence, pursue litigation, and demonstrate harm to obtain damages or injunctive relief. But ex post enforcement is costly, uncertain, and highly imperfect, due to the very nature of intangible knowledge which makes monitoring and proof of misappropriation difficult.\citep{Graham2005, MorganLewis2015, contigiani2019trade}. In principle,
one must effectively monitor every possible use of the disclosed information. 

And then there is the matter of enforcement. Insufficient enforcement makes these legal protections less valuable. But aggressive enforcement creates the potential for harmful Type-I errors (“false positives”), where innocent parties are penalized. This is exemplified, for instance, in the chilling effect on innovation caused by patent trolls.\citep{cohen2019patent}

By contrast, our approach leverages \emph{ex interim enforcement} via agentic bargaining within a secure Trusted Execution Environment (TEE). No external copies or usage can occur during negotiations, nor can they occur in the absence of mutual agreement. Thus disclosure is effectively ‘conditional’ on reaching agreement. This prevents any unauthorized use from arising in the first place, obviating ex post enforcement challenges. Because the TEE is relatively low‐cost and eliminates the need for costly oversight or litigation, we avoid the intractable monitoring problem of needing to observe essentially every conceivable subsequent use of the disclosed information. Type~I risks don't show up overtly in our setting, since there is no "over enforcement." However, there is the possibility in our model of agent overpayment errors, which do analogize to Type-I risks.\footnote{The analogy is stronger than you might think. Consider that the risk of signing an NDA is essentially an overpayment risk, taking on potential liability in exchange for some expected value of disclosure that may be insufficient. The same idea extends to patents, where the expected benefit of commercializing an invention is weighed against the cost of possible litigation.} But in our setting these risks are local and don't induce the broad chilling effect on innovation than patent trolling does.

As a result, this the TEE approach may substitute or complement traditional legal instruments by shifting enforcement inside the technology rather than relying solely on ex post legal remedies or ex ante information withholding.

\subsection{Background and Related Work}

The incentive incompatibility of disclosure is a classic economic dilemma. And this incompatibility imposes serious costs: inventors cannot trust that their disclosures will remain protected, and so many potential breakthroughs may remain hidden or never be developed in the first place. And, since new innovations themselves are often spurred through recombining of ideas \citep{fleming2001recombinant}, this effect can have long-lasting stifling implications.

Under incomplete contracts, parties must balance the need to disclose private information to realize gains from trade against the risk that disclosed information may be appropriated \citep{arrow1972economic}. This trade-off is especially stark for innovators or entrepreneurs, who often incur large fixed costs to develop a new product or idea but cannot attract funding without revealing it \citep{nelson1959simple}. 

Our analysis builds on several literature streams. First, we extend work on information disclosure in contracting \citep{crawford1982strategic, okuno1991incentive} by explicitly modeling disclosure as a continuous choice under uncertainty. Second, we connect to research on hold-up problems and incomplete contracts \citep{hart1988incomplete, aghion1992innovation, grossman1986disclosure, bernheim1987sequential, anton1994expropriation}, showing how technological solutions like trusted execution environments can mitigate appropriation risks.  AI agents in our treatment are not quite \emph{automata} in a machine game, as in \citet{rubinstein1986finite} but could modeled as such. Our treatment of TEE-resident AI agents is closer to the principle agent setting of \citet{aghion1997formal}, where agents have some \emph{congruence parameter} measuring how closely its objectives match the principal’s.

By introducing cryptographic and hardware-based solutions, our framework departs from traditional reliance on legal instruments such as patents or NDAs, offering a technologically enforced approach to secure collaboration. This complements the partial disclosure focus in \citet{anton1994expropriation, anton2002sale} by suggesting that if inventors can reliably limit expropriation through secure hardware, they may opt for more complete disclosure earlier in the R\&D timeline, thereby accelerating cumulative innovation in the spirit of \citet{Scotchmer1999}.

Finally, we also note that this problem mirrors the time-priority exploitation common in blockchain MEV \citep{daian2020flash}, where e.g. order flow information necessary for market function enables front-running. This is a major inspiration for our solution as well as connecting this work to the literature on MEV and mechanism design in decentralized systems \citep{roughgarden2020transaction, capponi2023adoption}.

\paragraph{Contributions.}
We highlight our key contributions as follows:
\begin{enumerate}[noitemsep]
\item \textbf{Formalizing the disclosure–expropriation trade‐off.} We develop a simple game‐theoretic model in which a seller (inventor) chooses how much to disclose to a buyer (investor) before a potential transaction, under the threat of expropriation. Absent any protective mechanism, full or partial disclosure is thwarted by hold‐up.

\item \textbf{Introducing a TEE‐based mechanism to resolve hold‐up.} 
We show that delegating decisions to AI agents operating inside a trusted execution environment (TEE) can render disclosure incentive‐compatible. Under sufficient security, full disclosure and investment can be achieved, raising total surplus and yielding Pareto improvements.

\item \textbf{Modeling TEE security risk for high‐value secrets.}
To address the concern that real‐world TEEs are not perfectly secure, we formalize the expected gain from malfeasance using threshold encryption and positive detection probabilities from the TEE. This yields a “scope condition” under which even high‐value ideas can be partially or fully protected, bridging theory and practical adoption.

\item \textbf{Extending the analysis to imperfect (“noisy”) agents.} 
We relax the assumption of perfectly congruent agents to allow for random errors in payments or disclosure. We show that a simple budget cap for the buyer’s agent and a reject‐option for the seller’s agent can contain the risk of overpayments or underpayments, preserving most gains from trade even with high error rates.

\item \textbf{Implications for policy and mechanism design.} 
Our results illustrate how cryptographic or hardware‐based safeguards can substitute for—rather than merely supplement—costly legal instruments like NDAs. This has broad ramifications for protecting intellectual property, incentivizing R\&D, and promoting collaborative innovation across firms and industries.
\end{enumerate}

\subsection{Model Framework and Roadmap}

We develop a model where two parties must choose disclosure levels before learning if gains from trade exist. Higher disclosure increases the probability of realizing potential gains but also raises appropriation risk. The model demonstrates how uncertainty about trade value interacts with expropriation risk to create undesirable equilibria.

We then propose a technological approach that reduces expropriation risk by leveraging secure environments rather than relying on contractual solutions alone. Specifically, we show how AI agents in trusted execution environments can improve efficiency by making disclosure conditional on agreement, at the cost of some potential uncertainty.

\paragraph{Roadmap.}
The remainder of the paper is structured as follows.
Section~\ref{sec:model} introduces our baseline disclosure game under expropriation,
highlighting why the seller must withhold information absent a protective mechanism.
Section~\ref{sec:NDA} then presents our trusted-execution-environment (TEE) setting,
demonstrating how secure hardware and AI agents can restore efficient disclosure.
Section~\ref{sec:TEE-agents-perfect} characterizes equilibrium behavior in this TEE-based setting,
while Section~\ref{sec:TEE-agents-errors} extends the analysis to consider realistic AI‐agent errors.
Finally, Section~\ref{sec:conclusion} discusses broader policy implications and concludes.

\bigskip

\section{A Baseline Model of Innovation Disclosure}
\label{sec:model}
We develop a model of bargaining over a divisible \emph{information good} (e.g.\ an invention). A seller has an information good that can realize higher value if outside investment is achieved. They may seek investment from a buyer who does not observe the good's value directly and must rely on \emph{disclosure} by the seller to learn about its quality. However, any disclosure—whether partial or full—risks \emph{expropriation} of the 
information by the buyer.

\subsection{Setup and Payoffs}
\label{subsec:setup}

\paragraph{Types and disclosure.}
A seller $i$ has a divisible information good $\omega_i$, privately drawn from a commonly known $\mathrm{U} \sim [0,1)$. We let $\omega_i$ stand here for both the value of the seller’s 
information good and to denote the seller’s type. The buyer does not observe $\omega$ directly, and thus relies on the seller to disclose some portion $\hat\omega \le \omega$ to convey its quality. The buyer is unwilling to invest without disclosure. However, if full or partial details are revealed, the buyer can expropriate them by engaging in ex post renegotiation with the seller. 

\paragraph{Outside Options.}
If the seller discloses nothing ($\hat\omega=0$), the buyer learns no information and thus receives their outside option which is normalized at $0$. By retaining the invention, the seller obtains $\alpha_0\,\omega$, with $\alpha_0 \in (0,1]$.\footnote{Together these conditions set the seller's bargaining power higher than the buyer's. We relax this assumption later as it has no bearing on our essential results--it just eases exposition here.} This discounted private value can be interpreted as the invention's residual value if the seller must develop or commercialize it without the buyer.

\paragraph{Payoffs from trade vs.\ expropriation.}
If the buyer invests---paying price $P$---we treat the \emph{joint surplus} as $\hat\omega$,
split via $P$ in favor of the seller.  More precisely,
\begin{equation}\label{eq:payoff_invest}
  u_S(\omega;\,\text{Invest}) \;=\; P \;+\; \alpha_0 \,\bigl(\omega - \hat\omega\bigr),
  \quad
  u_B(\omega;\,\text{Invest}) \;=\; \hat\omega \;-\; P.
\end{equation}
Alternatively, if the buyer \emph{expropriates} the disclosed portion $\hat\omega$ outright,
the payoffs become
\begin{equation}\label{eq:payoff_expropriation}
  u_S(\omega;\,\text{Expropriate}) \;=\; \alpha_0 \,\bigl(\omega - \hat\omega\bigr),
  \quad
  u_B(\omega;\,\text{Expropriate}) \;=\; \hat\omega.
\end{equation}

\subsection{The Breakdown Under Hold‐Up}
\label{subsec:baseline-no-TEE}
It is straightforward to see that, once the seller discloses any $\hat\omega>0$, the buyer can expropriate it without paying. Hence the seller discloses $\hat\omega=0$, and the outcome is:

\[
  u_S(\omega) \;=\; \alpha_0\,\omega, 
  \quad
  u_B(\omega) \;=\; 0.
\]
This is the familiar hold-up problem: no sale, no investment, and the seller is left with
only partial value $\alpha_0\,\omega$.\footnote{See \citet{anton1994expropriation} and \citet{okuno1991incentive} for analyses of partial disclosure equilibria. We attend to issues related to partial disclosure in Appendix \ref{app:cryptography}.}

\section{Building an Ironclad NDA: Trusted Execution and AI Agents}
\label{sec:NDA}
\subsection{AI Delegation via Secure Hardware}
We assume each player~\(i\) can delegate decisions to a program
\(A_i\), which we call an \emph{agent}. Formally, each agent~\(A_i\) is a function
\[
  A_i \colon \bigl(x_i,\;m_{j\setminus i},\;\varepsilon_i\bigr)
  \;\longmapsto\;
  a_i,
\]
where
\begin{itemize}
  \item \(x_i\) is player~\(i\)'s private input (e.g.\ the seller's private~\(\omega\)),
  \item \(m_{j\setminus i}\) represents messages or data received from other agents,
  \item \(\varepsilon_i\) is a random variable capturing stochastic or approximate
    behavior by~\(A_i\),
  \item and \(a_i\) is the action chosen by agent~\(A_i\) on behalf of player~\(i\).
\end{itemize}
Each agent~\(A_i\) aims to maximize player~\(i\)'s payoff, subject to the noise
or imprecision captured by~\(\varepsilon_i\). The agents \(A_1,\dots,A_n\) run inside a secure environment with cryptographic guarantees. 

\paragraph{Secure Execution Environment}In practice, this secure environment may be implemented using \emph{trusted execution environments} (TEEs). Real-world TEEs (e.g.\ Intel SGX) can run arbitrary code, permitting programs (agents) to process private data while enforcing secrecy. While in theory other cryptographic primitives may offer similar benefits (see Appendix \ref{app:cryptography} for a brief survey), TEEs offer a practical and sufficiently general solution. Thus we treat agents as ``TEE-resident'' programs.

\paragraph{TEE Functionality.}
We combine the agents~\(\{A_i\}\) into a single secure TEE function
~\(\mathcal{T}\). This function collects private inputs
\(\bigl(x_1,\dots,x_n\bigr)\) from the \(n\) players, mediates any necessary
inter-agent messaging, and finally agent actions \(\bigl(a_1,\dots,a_n\bigr)\).
Thus:
\[
  \mathcal{T}\!\Bigl((x_1,\dots,x_n),\,\varepsilon\Bigr)
  \;=\;
  f\bigl(a_1,\dots,a_n\bigr),
\]
where \(\varepsilon = (\varepsilon_1,\dots,\varepsilon_n)\) captures the
randomness or error terms associated with each agent's decision process.
Inside \(\mathcal{T}\), each agent~\(A_i\) receives \(\bigl(x_i,m_{j\setminus i},\varepsilon_i\bigr)\) and returns~\(a_i\). If the environment is fully secure, no player learns more about another player's private input than is revealed by the final outputs $f(a_i)$, ensuring privacy and security.

\subsection{Provisioning the TEE}
The TEE is used to protect invention of value \(\omega > 0\) but in practice TEEs are not perfectly secure, and are periodically exposed to opportunistic hacks and jailbreaks. To secure against this, players may collectively employ \(n\) distinct Trusted Execution Environments (TEEs), each belonging to a different provider, and use a \((k,n)\)-threshold encryption scheme with secret sharing. Concretely, each TEE holds only an \emph{encrypted partial share} of the secret, so that no subset of size \(k-1\) or smaller can reconstruct \(\omega\). We assume colluding groups are precisely size $k$, since any larger groups e.g. $k+1$ don't attain more surplus (i.e. they would have the same "characteristic function".)

Observe that TEEs often employ physical micro-architectures and firmware~\citep{costan2016sgx,arm2020security} that exhibit tamper‐evident
properties. For example, an attempted TEE breach may compromise the microcode
or trigger platform‐level logs, leaving a detectable trail. This gives rise to some positive probability of any breach, an effect which may compound under our scheme. 

\paragraph{Modeling the Scope Conditions for Security}
Let \(p\) be the probability that a breach is detected. If caught, the provider incurs a penalty~\(C\). From the TEE provider's perspective, the expected net benefit of colluding on a size-\(k\) expropriation is
\[
  (1 - p^k)\,\frac{\omega}{k}
  \quad\text{versus the penalty } \quad
  p^k\,C.
\]
Thus, to deter expropriation, the principals must ensure a \(k\) and \(\omega\) such that
the expected colluders' gain is no larger than the penalty:
\begin{equation}\label{eq:TEE_scope}
  (1 - p^k)\,\frac{\omega}{k} \;\le\; p^k\,C
  \quad\Longrightarrow\quad
  \omega \;\le\;   \underbrace{
    \frac{k\;\bigl(1 - (1-p)^{\,k}\bigr)}
         {(1-p)^{\,k}
  }
  \;C}_{\displaystyle\Phi(k,\,p, C)}
\end{equation}

\paragraph{Scope Condition on Secure Value.}
\label{cor:scope}
A secret of value $\omega$ is safe if and only if
\[
   \omega \;\;\le\;\; \Phi.
\]
In other words, security holds whenever $\omega$ does not exceed the threshold $\Phi$.
 We characterize this further in Appendix~\ref{sec:TEE-security-appendix}, showing that for plausible parameter choices a threshold‐TEE approach can plausibly protect quite large values. 

\section{Perfect Agents, Perfect Security: The Main Theorem}
\label{sec:TEE-agents-perfect}

Next, suppose the seller and buyer \emph{both} opt in to the TEE-based arrangement described in Section \ref{sec:NDA}. Each party delegates to an agent $A_i$ that runs securely within the TEE, exchanging data (like $\omega$) without external leakage. This eliminates expropriation risk and leads to \emph{maximal secure disclosure} within the scope conditions, $\hat{\omega} = \min \{\,\omega,\;\Phi\}$ and efficient trade.

\subsection{Mechanism and Timeline}

\begin{enumerate}[label=(\arabic*)]
\item \textbf{Nature draws} $\omega\sim\text{U}(0,1)$, observed only by the seller.
\item \textbf{Delegation.}  Both parties choose whether to delegate to the TEE:
  if either refuses, we revert to the baseline \S\ref{subsec:baseline-no-TEE} outcome.
\item \textbf{Budget and thresholds.}  The buyer endows its agent $A_B$ with a budget
  $\overline{P}$. The seller discloses some value $\hat{\omega} \le \Phi$ to its agent.
\item \textbf{Secure bargaining.} Inside the TEE, $A_S$ privately reveals $\omega$
  to $A_B$ (no risk of expropriation). The agents bargain to on some split of $\omega$. Then $A_B$ either (a)offers payment $\hat P < \overline{P}$ and indicates \emph{accept} to the TEE or (b)~offers no deal, indicating \emph{exit} to the TEE. $A_S$ checks that $\hat P$ reflects the bargaining split, indicating \emph{accept} or \emph{exit}. 
  \item \textbf{TEE Output.} On mutual \emph{accept}, the transaction completes and $\omega$ is released to the buyer and $P$ to the seller. If either agent indicates \emph{exit}\footnote{Or some mutually agreed and pre-specified time elapses.} the TEE deletes the session and terminates.
\end{enumerate}

\subsection{TEE Equilibrium with Congruent Agents}
\paragraph{Bargaining Setup.}
Inside the TEE, the buyer’s agent and the seller’s agent solve a symmetrical
Nash‐bargaining problem over how to split $\hat{\omega}$.  The seller’s
threat point is $\alpha_0\,\hat{\omega}$ (reflecting an outside option or
status quo payoff), and the buyer’s threat point is $0$. Formally,

\[
  \max_{u_S,\,u_B}\;\bigl(u_S - \alpha_0 \hat{\omega}\bigr)\;\times\;\bigl(u_B - 0\bigr)
  \quad\text{subject to}\quad
  u_S + u_B = \hat{\omega},\;\;
  u_S \ge \alpha_0 \hat{\omega},\;\;
  u_B \ge 0.
\]
A straightforward Nash bargaining argument shows that the solution is
\[
  u_S^* 
  \;=\; \alpha_0\,\hat{\omega} 
        \;+\; \tfrac12\bigl(\hat{\omega} - \alpha_0\,\hat{\omega}\bigr),
  \qquad
  u_B^* 
  \;=\; \tfrac12\bigl(\hat{\omega} - \alpha_0\,\hat{\omega}\bigr).
\]
It follows that the fraction of \(\hat{\omega}\) accruing to the seller is given by
\begin{equation}\label{eq:theta_definition}
  \theta 
  \;=\; 
  \frac{1 + \alpha_0}{2}.
\end{equation}
Accordingly, the buyer receives \( (1-\theta)\,\hat{\omega} = \tfrac{1-\alpha_0}{2}\,\hat{\omega}\).
We denote \(\theta\) as the seller’s \emph{equilibrium share} throughout the analysis. 

Hence, the price the buyer pays in equilibrium is 
\begin{equation}\label{eq:price_equilibrium}
  P^* 
  \;=\; 
  \theta\,\hat{\omega},
\end{equation}
matching the split derived above.

\paragraph{Buyer’s Budget Choice.}
Since in equilibrium the largest possible $\hat{\omega} = \min \{\,\omega,\;\Phi\}$, a lower budget risks losing out on high-type deals, while a higher budget would allow overpayments exceeding the total surplus. This yields:
\[
  \overline{P} \;=\;\min \{\,\omega,\;\Phi\}
\] 

\paragraph{Seller’s Disclosure Choice.}
It's straightforward that $\frac{d\,u_S^*}{d\,\tilde{\omega}} > 0$, and so the seller's utility is increasing in disclosure (possibly bound by the security constraint). The seller thus discloses to the agent:
\[
\forall i, \hat{\omega_i} = \min \{\,\omega_i,\;\Phi\}
\]
If $\omega < \Phi$, then all sellers disclose maximally to agents who, by the same logic, disclose within the TEE. If $\omega_i > \Phi$, the seller discloses only $\hat{\omega}_i = \Phi$
(because revealing more is insecure), and the buyer pays $\min\{\theta\,\omega_i,\;\theta\,\Phi\}$.

\begin{theorem}[TEE Mitigates Hold-Up]
\label{thm:TEE-noError}
Under the TEE arrangement with maximally aligned agents and sufficient security $\omega \;\;\le\;\; \Phi$ the unique equilibrium outcome is \emph{full disclosure} ($\hat{\omega}=\omega$) and \emph{investment} at price $P=\theta\,\omega$.  Both parties strictly prefer this agreement over the no-TEE baseline, which gives $(\alpha_0\,\omega,\,0)$.
\end{theorem}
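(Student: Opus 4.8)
The plan is to solve the game by backward induction through the timeline of Section~\ref{sec:TEE-agents-perfect}, then verify the delegation (opt-in) stage and the strict Pareto ranking. Throughout I take $\alpha_0 < 1$ (the boundary case $\alpha_0 = 1$ collapses the gains from trade, and only a weak version of the claim survives there), and I invoke the scope condition \eqref{eq:TEE_scope}: because $\omega \le \Phi$, the threshold-encryption deterrence inequality holds, so no size-$k$ coalition of TEE providers finds expropriation profitable and disclosure inside the box carries no appropriation risk. This is what lets me treat the seller's only relevant fallback as the baseline revert rather than the hold-up of \S\ref{subsec:baseline-no-TEE}.

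First I would resolve the terminal \emph{accept}/\emph{exit} node. Conditional on full disclosure and a non-binding budget, the Nash-bargaining split \eqref{eq:theta_definition} assigns the seller $\theta\omega$ and the buyer $(1-\theta)\omega = \tfrac{1-\alpha_0}{2}\,\omega$. Exiting reverts each party to the baseline, namely $\alpha_0\omega$ for the seller and $0$ for the buyer. Since $\theta\omega - \alpha_0\omega = \tfrac{1-\alpha_0}{2}\,\omega > 0$ and $(1-\theta)\omega > 0$, both agents strictly prefer \emph{accept}; with perfectly congruent agents this is exactly the action each chooses on its principal's behalf, so trade completes at $P^* = \theta\omega$.

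Next I would move up to the disclosure and budget stages. Writing the seller's total payoff as $u_S = \theta\hat\omega + \alpha_0(\omega-\hat\omega) = \alpha_0\omega + \tfrac{1-\alpha_0}{2}\,\hat\omega$, the coefficient on $\hat\omega$ is strictly positive, so the seller's unique best response is to disclose as much as is secure, $\hat\omega = \min\{\omega,\Phi\}=\omega$. Because the equilibrium price satisfies $P^* = \theta\omega \le \omega \le \Phi$, any budget $\overline P \ge \theta\Phi$ leaves the cap slack for every realization $\omega \le \Phi$, so with perfect agents the budget never binds and cannot distort the outcome; the buyer is therefore content to provision a sufficient budget. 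This pins down full disclosure and investment at $P^* = \theta\omega$ as the unique continuation outcome.

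Finally I would verify delegation and assemble uniqueness. At the opt-in node each party compares the TEE payoffs $(\theta\omega,(1-\theta)\omega)$ against the baseline $(\alpha_0\omega,0)$; the same two strict inequalities from the accept step show both strictly prefer to delegate, so mutual opt-in is the unique equilibrium there and the strict Pareto improvement claimed in the theorem holds. Uniqueness of the whole outcome then follows from strictness at every node: the seller's disclosure best response is unique because $u_S$ is strictly increasing in $\hat\omega$, the Nash-bargaining price is unique by construction, and \emph{accept} strictly dominates \emph{exit} for both agents. The step I expect to require the most care is the budget--disclosure interaction: I must confirm the buyer cannot profitably under-provision $\overline P$ to squeeze the seller. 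This fails precisely because the buyer commits to $\overline P$ before observing $\omega$, and because with congruent agents any cap above the state-contingent price $\theta\omega$ is payoff-irrelevant while any cap below it forfeits a strictly positive surplus $(1-\theta)\omega$ — so there is no commitment value in a tighter cap.
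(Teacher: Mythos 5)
Your proposal is correct and takes essentially the same route as the paper's own (much terser) proof, which is exactly this backward induction compressed into three sentences: the TEE removes appropriation risk so $A_S$ discloses $\omega$ freely, the buyer's agent pays the Nash price $P=\theta\,\omega$ which meets $A_S$'s acceptance threshold, and both sides strictly improve on $(\alpha_0\,\omega,\,0)$ so no deviation is profitable and the outcome is unique. Your two refinements --- flagging that the buyer's \emph{strict} preference fails at the boundary $\alpha_0=1$ (where $\theta=1$ and the buyer's surplus vanishes), and verifying that under-provisioning $\overline{P}$ carries no commitment value (a subtlety the paper defers to its appendix on budget constraints as commitment) --- tighten the argument but do not constitute a different approach.
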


\begin{proof}[Simply,]
because the TEE prevents misappropriation, the seller’s agent $A_S$ freely discloses
$\omega$ to $A_B$.  The buyer’s agent then pays $P=\theta\,\omega$, which is
accepted because it meets $A_S$’s threshold.  Both sides are strictly better off
than the fallback equilibrium $(\alpha_0\,\omega,\,0)$.  No deviation is profitable,
so this outcome is unique.
\smallskip

\noindent
\emph{Note} Equivalent logic holds for $\omega_i > \Phi$. If \(\omega>\Phi\), the seller discloses \(\Phi\), earning a correspondingly higher payoff than the outside option 
\(\alpha_0\,\omega\), though not the full \(\omega\). In this case some fraction of the invention's surplus remains undisclosed and unsecured. This partially mitigates hold-up, making both parties better off, but not fully eliminating it.
\end{proof}

Theorem~\ref{thm:TEE-noError} shows how secure TEEs plus perfectly congruent agents can
resolve Arrow’s disclosure–expropriation dilemma, at least with respect to some security bounds. In practice, of course, real AI agents are imperfect.  We address that next.

\section{Robustness to Agent Errors: How Good Do These Agents Need to Be?}
\label{sec:TEE-agents-errors}

We now relax the assumption of perfectly aligned, error-free agents. In reality, the buyer’s agent may occasionally overpay, while the seller’s agent might misreport $\omega$. Rather than causing a complete breakdown of the agreement, these mistakes are constrained by two natural features of the mechanism: (i)~the buyer’s \emph{budget cap}, which prevents unbounded overpayment, and (ii)~the seller’s \emph{acceptance threshold}, which rejects overly low offers. Even if errors are frequent, these features ensure that most realized payoffs remain above the no-TEE baseline.

\subsection*{Illustrative Example}
Suppose the buyer instructs its agent $A_B$ to pay $P(\omega) = \theta \,\omega$ whenever the TEE confirms a disclosure of size $\omega$. Now introduce a random error $e_b$ (with mean $0$) to this payment. If $e_b$ is sufficiently negative, the offered payment falls below $\theta \,\omega$, and the seller’s agent rejects, yielding no trade. If $e_b$ is positive, the buyer would attempt to pay $\theta \,\omega + e_b$, but the \emph{budget cap} $\overline{P}$ truncates extreme overpayment. In expectation, this setup limits the buyer’s downside while ensuring the seller receives its intended payoff $\theta \,\omega$.

\medskip

To see this effect more explicitly, let $\Pi_B$ be the buyer’s ex-ante expected payoff, net of any errors. As we show in Appendix~\ref{app:agent-errors}, it naturally decomposes into three components:
\begin{equation}\label{eq:buyer_decompose}
    \Pi_B
    \;=\;
    \underbrace{\text{(no-error baseline payoff)}}_{\text{(1)}}
    \;-\;
    \underbrace{\text{(loss from underpayment rejections)}}_{\text{(2)}}
    \;+\;
    \underbrace{\text{(budget-cap offset)}}_{\text{(3)}}
    \,.
\end{equation}
The first term is the buyer’s surplus if there were no mistakes, the second captures forgone gains when negative errors scuttle deals, and the third term reflects savings from the budget cap’s truncation of large positive errors.

\begin{corollary}[Robustness to Agent Errors]\label{cor:robust-error}
There exists a positive threshold of error magnitudes $(E_s^{\max}, E_b^{\max})$ below which each player’s ex ante payoff under the TEE mechanism remains strictly above their baseline (no-TEE) payoff. Hence the mechanism is robust to imperfect agents for a broad range of error levels.
\end{corollary}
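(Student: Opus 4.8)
The plan is to establish Corollary~\ref{cor:robust-error} by a continuity argument anchored at the error-free benchmark of Theorem~\ref{thm:TEE-noError}. At zero error magnitude, Theorem~\ref{thm:TEE-noError} gives each party a payoff strictly above the no-TEE baseline: the seller receives $\theta\,\omega > \alpha_0\,\omega$ (strict since $\theta = \tfrac{1+\alpha_0}{2} > \alpha_0$ whenever $\alpha_0 < 1$), and the buyer receives $(1-\theta)\,\omega > 0$. Since the baseline payoffs $(\alpha_0\,\omega,\,0)$ do not depend on the error terms, it suffices to show that each player's ex ante TEE payoff is continuous in the error magnitudes and converges to the error-free value as those magnitudes shrink to zero. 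The existence of a strictly positive threshold $(E_s^{\max}, E_b^{\max})$ then follows immediately from the definition of continuity applied to the strict inequality.

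First I would fix the decomposition in \eqref{eq:buyer_decompose} as the organizing device for the buyer's side. I would parametrize the error distributions so that the support (or, say, the standard deviation) of $e_b$ scales with a magnitude parameter $E_b$, with $E_b = 0$ recovering the degenerate point mass at zero. Under the mechanism, the seller's acceptance threshold converts a negative error into a rejection (term (2)), while the budget cap $\overline{P} = \min\{\omega,\Phi\}$ truncates positive errors (term (3)). I would show term (2), the expected forgone surplus from rejected deals, is bounded above by $\Pr[e_b < 0]$ times the per-deal surplus, and that this probability mass times its conditional loss vanishes as $E_b \to 0$ because the error concentrates at zero. Term (3), the budget-cap offset, is nonnegative and also vanishes in the same limit. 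Hence $\Pi_B \to$ (the term~(1) baseline) $= (1-\theta)\,\E[\omega] > 0$ continuously in $E_b$.

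Second, I would run the symmetric argument for the seller. Here the relevant error is the seller's agent misreporting $\omega$ (with magnitude $E_s$) and/or the downside that a low realized offer, after the buyer's error, falls below the acceptance threshold and kills an otherwise profitable trade. The key observation is that the seller's acceptance threshold is a floor: conditional on trade, the seller never receives less than its reservation $\alpha_0\,\hat\omega$, so the only channel by which errors hurt the seller is the lost-trade channel, whose probability again vanishes as $(E_s, E_b) \to 0$. I would therefore bound the seller's expected shortfall relative to the error-free payoff by the probability of a breakdown times the gain-from-trade-per-deal, and invoke continuity once more.

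The main obstacle I anticipate is \emph{not} the limiting argument itself but making the continuity precise without over-specifying the error model: the statement claims a threshold for ``a broad range of error levels,'' so I must be careful that the bounds in terms~(2) and~(3) are uniform enough to yield an explicit, distribution-robust threshold rather than merely an abstract limit. Concretely, the delicate step is controlling the interaction between the two errors — a negative buyer error combined with a misreported $\omega$ could in principle compound — so I would either assume the errors are independent with mean zero (as stated in the illustrative example) or bound their joint effect by the sum of the marginal effects via a union bound on the breakdown event. Once the breakdown probability is shown to be continuous and to vanish at the origin, the strict-inequality cushion $\theta - \alpha_0 > 0$ and $(1-\theta) > 0$ guarantees a genuinely positive threshold, completing the proof. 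The detailed accounting of the three terms is deferred to Appendix~\ref{app:agent-errors}.
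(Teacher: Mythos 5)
Your continuity argument has a genuine gap at its central step: the buyer's ex ante payoff is \emph{not} continuous at zero error magnitude, and the term you claim vanishes does not vanish. Under the mechanism, the seller's agent rejects any offer below $\theta\,\hat\omega$, so trade breaks down exactly when $e_b<0$. For a mean-zero error --- indeed for any scale family $e_b = E_b\cdot \eta$ with $\eta$ a fixed mean-zero, nondegenerate random variable --- the rejection probability $\Pr[e_b<0]=\Pr[\eta<0]$ is \emph{scale-invariant}: it stays near $\tfrac12$ no matter how small $E_b$ is. The conditional loss per rejected deal is the full surplus $(1-\theta)\,\omega$, which also does not shrink with $E_b$. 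Hence your term (2) converges to $\tfrac12\,(1-\theta)\,\E[\omega]=\tfrac{1-\theta}{4}$, not to $0$, and $\Pi_B(E_b)\to\tfrac{1-\theta}{4}\neq\tfrac{1-\theta}{2}=\Pi_B(0)$ as $E_b\to 0^+$. The buyer's payoff jumps down by half the baseline surplus the instant any noise is introduced, so ``continuity at the error-free benchmark plus a strict cushion'' cannot deliver the threshold. (Your seller-side observation that the acceptance threshold acts as a floor is correct and matches the paper: breakdown leaves the seller exactly at $\alpha_0\,\omega$ and underdisclosure still pays above it, so the seller's ex ante payoff strictly exceeds baseline whenever trade occurs with positive probability --- no vanishing-probability argument is needed there either.)

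The conclusion survives, but for a different reason, and this is how the paper actually proves it: even after losing roughly half of all trades, the buyer's residual surplus remains positive provided the capped overpayment cost does not exhaust it. The paper computes the payoff in closed form under $e_b\sim\mathrm{Unif}(-2\theta,\,2\theta)$ with the budget cap, obtaining $\Pi_B(\theta)=\tfrac{6-11\theta}{24}$, which is strictly positive for all $\theta<\tfrac{6}{11}$, i.e.\ for error half-ranges up to $E_b^{*}=\tfrac{12}{11}$ --- larger than the entire surplus. The robustness claim is therefore not a local, small-error statement at all; it is a global computation in which the acceptance threshold (killing underpayments) and the budget cap (truncating overpayments) jointly bound the damage, and the ``broad range'' in the corollary refers to this explicit, large threshold. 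If you want to salvage a limit-style argument, you could note that $\Pi_B$ is continuous on $E_b\in(0,\infty)$ and that its one-sided limit $\tfrac{1-\theta}{4}$ at $0^+$ is strictly positive, then extend by continuity on the open half-line; but you must abandon the claim that the error-free value is recovered in the limit, and you would still obtain a weaker, non-explicit threshold than the paper's direct calculation provides.
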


\medskip

\subsection*{Comparison: Budget-Capped vs.\ No Cap}
To visualize this robustness, Figure~\ref{fig:no-cap-vs-cap} plots the buyer’s expected payoff $\Pi_B(\theta,E_b)$ as a function of the error magnitude $E_b$, under two regimes: one with a budget cap (solid line) and one without (dashed). For $\theta = 0.6$, the buyer’s payoff \emph{without} any cap crosses zero once $E_b$ grows to $0.4$, while the payoff \emph{with} a cap remains positive up to $E_b = 0.6$. In other words, a budget cap “buys more tolerance” for error before the mechanism ceases to deliver strictly positive returns.

\begin{figure}[ht]
\centering
\begin{tikzpicture}
  \begin{axis}[
    width=9cm, height=6.5cm,
    xlabel={$E_b$ (overpayment error)},
    ylabel={Buyer payoff $\Pi_B(\theta,E_b)$},
    xmin=0, xmax=0.6,
    ymin=-0.1, ymax=0.22,
    legend style={at={(0.95,0.95)}, anchor=north east, draw=none, fill=none},
    grid=major
  ]
    % Parameters for theta=0.6:
    % - no cap:  Pi_B^{(noCap)}(E_b) = 0.2 - 0.5*x
    % - budget cap: Pi_B(E_b) = 0.2 - 0.5*x + x^2/(3.6)

    % Budget cap payoff (solid, blue)
    \addplot [
      domain=0:0.6, samples=200, very thick, blue
    ]
      {0.2 - 0.5*x + x^2/3.6};
    \addlegendentry{Budget Cap (solid)}

    % No cap payoff (dashed, red)
    \addplot [
      domain=0:0.6, samples=200, very thick, red, dashed
    ]
      {0.2 - 0.5*x};
    \addlegendentry{No Cap (dashed)}

    \draw[dotted, thick] (axis cs:0.4,0) -- (axis cs:0.4,-0.1);
    \draw[dotted, thick] (axis cs:0.6,0) -- (axis cs:0.6,-0.1);
  \end{axis}
\end{tikzpicture}
\caption{Comparison of the buyer’s expected payoff under a budget cap (solid) vs.\ no cap
(dashed), for $\theta = 0.6$. The no-cap payoff crosses zero at $E_b=0.4$, whereas the
budget-capped payoff remains positive until $E_b=0.6$.}
\label{fig:no-cap-vs-cap}
\end{figure}
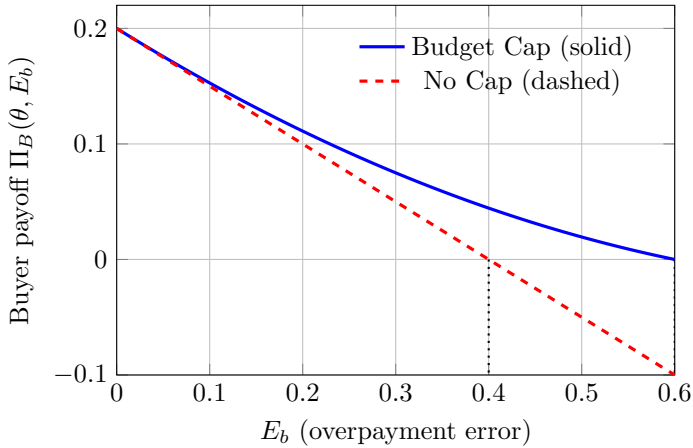

\medskip

\noindent \textbf{Implication.} Despite the introduction of noise in either the buyer’s or seller’s agent, a simple budget constraint prevents unbounded losses for the buyer, while the seller’s acceptance threshold rejects low payments. As a result, both sides still find it strictly profitable to delegate to the TEE for any moderate level of error. This illustrates that *even if AI agents deviate substantially from their prescribed instructions*, the TEE‐based arrangement generally preserves the core benefits of secure disclosure.

\section{Conclusion and Policy Implications}
\label{sec:conclusion}
\subsection{Implications for Policy and Regulation}

Our analysis shows how trusted execution environments (TEEs) can drastically reduce the expropriation risk inherent in disclosing valuable inventions or ideas. From a policy perspective, this highlights several opportunities:

\begin{itemize}
    \item \textbf{Promoting secure hardware adoption.}
    Governments or regulators looking to spur innovation might consider incentives for broader TEE use—e.g., certifying TEE security standards or funding research to better secure them. 
    
    \item \textbf{Supporting a market for “secure collaboration.”}
    Institutions (funding agencies, public research labs) could provide or subsidize common TEE infrastructures. This would facilitate safer partnerships between inventors, investors, or large R\&D labs, potentially leading to more rapid technological diffusion.
    
    \item \textbf{Strengthening or clarifying liability rules.}
    Where TEEs are not yet fully reliable, clarifying the legal ramifications of agent misbehavior or data leaks can complement the hardware solution. Hybrid arrangements—secure hardware plus more effective “breach liability”—can offer robust protection without stifling cooperative innovation.
\end{itemize}

In short, the policy takeaway is that cryptographic or hardware‐based solutions can be a powerful supplement to legal frameworks like patents or NDAs. Far from a minor improvement, these methods can transform disclosure from a perilous gamble to a routine, secure transaction, thereby mitigating Arrow’s core information paradox.

\subsection{Limitations and Possible Extensions}
\label{sec:limitations}

While our model shows the promise of TEE‐based disclosure, several limitations remain:

\begin{itemize}
    \item \textbf{Hardware trust assumptions.} We assume TEEs are fully secure. In reality, side‐channel attacks and other vulnerabilities could undermine secrecy if the hardware is compromised. A ideally secure implementation under today's technology would supplement the TEE with other security measures, such as "proof of cloud" to secure against physical access.
    \item \textbf{Agent collusion or correlation of errors.} We treat agent errors as idiosyncratic and independent of the invention’s complexity. If errors become systematically larger for more complex inventions, the model’s predictions may require re‐examination. Because of the budget cap, this is unlikely to affect incentive compatibility but may affect surplus, perhaps substantially.
    \item \textbf{Real‐world adoption frictions.} Institutional inertia, user mistrust of agents or “black‐box” solutions, or high costs could slow TEE adoption, potentially limiting their transformative impact on R\&D collaboration.
\end{itemize}

Addressing these issues might require combining trusted hardware with further security measures or legal guarantees. Future work can incorporate these real‐world frictions into the theoretical framework, illuminating how robust the TEE solution remains under more complex settings.

\subsection{Summarizing}

We studied the fundamental problem of disclosing valuable information under the threat of expropriation. Drawing on the classical tension posed by \citet{arrow1971essays} and \citet{nelson1959simple}, we developed a model in which a seller (the inventor) cannot fully appropriate her innovation if she reveals it prematurely. Traditionally, such scenarios yield low- or no‐disclosure equilibria, stifling socially beneficial trades.

Our main contribution is to show that \emph{trusted execution environments} (TEEs) together with AI agents can restore efficient disclosure. By delegating the decision process to secure programs that verify invention quality without unconditionally revealing it to the buyer, the seller can disclose confidentially, extract fair compensation, and thus avoid hold‐up. Even when the agents themselves are imperfect—subject to “errors” in disclosure or payment—the mechanism remains robust for a broad range of error magnitudes, thanks to natural design features such as budget caps and acceptance thresholds. 

This framework has broad implications. It effectively provides an “ironclad NDA,” substituting for uncertain legal enforcement with cryptographic or hardware‐level assurances. Policymakers seeking to promote R\&D can encourage TEE adoption to mitigate risks that often deter inventors from sharing their breakthroughs. Future work can enrich this approach by analyzing repeated interactions, reputation effects, or competition among multiple buyers. Our results highlight that, as trusted hardware and AI advance, we may see a new paradigm of secure information exchange—expanding the frontiers of innovation without the specter of expropriation.

\newpage
\appendix
\section*{Appendix}
\section{Discussion of Alternate Cryptographic Techniques to solve the Disclosure Game}\label{app:cryptography}
Our main analysis relies on Trusted Execution Environments (TEEs) to implement a secure, tamper-proof environment for disclosure. However, other cryptographic primitives and protocols could, in principle, achieve similar functionality under different trust or disclosure structures. In this section we will attend to some alternatives one might consider that don't rely on trusted hardware assumptions. 

\paragraph{Virtues of the TEE setting}
In our setup, ``Agents'' stand in for the players and bargain in a secure environment without fear of disclosure. We regard it as helpful that agents can interact (bargain) in an ``open'' way that simulates full disclosure and inspection as one might find in a real-world transaction. The TEE ensures that this openness does not introduce expropriation risks. It does of course introduce agency risks, which we attend to directly in the main body of the text as errors. 

\subsection{Partial Revelation Approaches using ZK or FHE}
Cryptographic primitives like Zero-knowledge Proofs (ZKP) or Fully Homomorphic Encryption (FHE.)\footnote{Zero-Knowledge Proofs allow one party (the ``prover'') to convince another party (the ``verifier'') of a statement’s truth without disclosing any additional information about the underlying data. Fully Homomorphic Encryption (FHE) permits arbitrary computations on encrypted data, so that a party (the buyer) can run calculations on an encrypted $\omega$ 
without ever decrypting it.} could allow sellers to prove (or buyers to query) some aspects of a good while protecting others. In our baseline game, the seller might prove that ``$\omega$ contains a process that can produce property $z$'' without revealing the process itself. 

These approaches are extremely useful and could be used cleverly to enhance our scheme. But we note two drawbacks of proving only some selective aspects of a good without revealing the good itself: \emph{adverse selection} faced by the buyer and \emph{externalities} from partial revelation faced by the seller.

\paragraph{Adverse Selection under ``proof'' schemes} Observe that the buyer would like to know  \emph{all} properties of the good that would be pertinent to their decision. Learning presumably positive properties like ``$x$ can lift 10,000 pounds with gasoline consumption $<y$'' might sound enticing to someone who thinks they are buying a forklift. But suppose when they complete the purchase based on learning this $x$ turns out to be an elephant. The proven property does of course hold, but the buyer wasn't fully informed without knowing e.g. ``$x$ responds poorly to emotional neglect and needs to consume 40 gallons of water daily.'' The buyer, by not considering whether to ask about emotional neglect, has been adversely selected against. 

Thus the buyer must play a complex version of 20 questions or risk adverse selection. And meanwhile, proving each property as above involves costly computation at every step. The key insight for our setting is that, intuitively, \emph{knowing $x$ itself (e.g. if it is an elephant or a forklift) better enables a buyer to home in on the relevant properties to be proven.} This is a virtue of mechanisms like TEE Disclosure which enable $x$ to be directly disclosed and thereby guide an assessment of the relevant properties.

\paragraph{Externalities from partial revelation} Even disregarding adverse selection and the costs of mitigating it, partial disclosure itself may leak valuable information. It's sometimes the case that \emph{knowing a property holds} (or \emph{knowing that a certain feat is achievable}) is itself highly valuable information. Thus, if we would like to prove "$\omega$ contains a process that can produce property $X$", without revealing the process, even that partial revelation would constitute valuable disclosure.

Scientific and industrial history abounds with cases where the mere fact that a problem has been solved 
or that a certain threshold is achievable can guide or accelerate rivals' efforts. 
A paradigmatic example is the Manhattan Project during World War II, 
where both the \emph{methods} and the very \emph{properties} of nuclear enrichment 
(\textit{e.g.}, that process $x$ can enrich isotope $y$ at scale) were kept highly classified. 
Simply disclosing `Yes, we achieved a controlled nuclear chain reaction'' 
would have been enough to direct competing programs away from blind alleys and expedite their progress.

Hence, in cases where properties themselves may be valuable, partial revelation schemes can be insufficient, or inadvertently leak too much.

At the limit it remains unrealistic--and perhaps impossible--that the buyer proves and seller evaluates the set of $\emph{all}$ properties that pertain to $x$ without the seller essentially learning $x$. As a result, partial revelation protocols are still constrained by expropriation risk which can reduce the buyer’s willingness to pay and the seller's value.\footnote{We may imagine designing more elaborate ZK or FHE schemes wherein e.g. information about valuable properties is coarsened and incrementally revealed conditional on payment, they would be very costly and even ignoring this it is unclear whether even careful information design could fully protect against leakage while ensuring against adverse selection.}

\paragraph{Summarizing} While cryptographic approaches like ZKP and FHE offer powerful tools for selective revelation, they face inherent limitations when used to prove properties without revealing the underlying good. Buyers risk adverse selection from unknown properties they didn't think to query, while sellers face externalities from partial revelation. We now turn to schemes that enable secure disclosure of the good itself, which allows buyers to properly assess its complete set of relevant properties.

\subsection{Full revelation using Secure Multi-party Computation (MPC) or Indistinguishability Obfuscation (iO)}

\paragraph{Secure Multi-Party Computation}
Another possible approach is to run a multi-party computation (MPC) protocol, 
perhaps combined with a threshold signature scheme, where no single party holds all the information needed to reconstruct the invention $\omega$ or forcibly expropriate it. 
Instead, the information and any necessary keys could be split among multiple signers, 
and the protocol would reveal output \emph{only} if enough signers collectively approve.

Indeed, we use such a setup to enhance the TEE security in our model. In practice, attending to the risk of collusion among threshold signers, or avoiding the potential for the ``threshold'' signer to acquire extractive bargaining power, is a strategic problem to be solved. MPC is not a standalone substitute, but rather a complement here and in general using it requires the sort of strategic considerations which we have attended to.

\paragraph{Indistinguishability Obfuscation} Indistinguishability Obfuscation (\emph{iO}) is a powerful cryptographic primitive that, if realized in practical form, could theoretically replicate the core TEE functionality:
it would allow the agents to operate within an ``obfuscated'' program, checking and disclosing all properties of the seller's invention~$\omega$, without leaking information about the invention or its properties. At a high level, iO means that if you have two equivalent circuits (they produce the same input-output mapping) run programs, the resulting obfuscated programs are computationally indistinguishable. 

In principle, one could obfuscate the logic so that the players obtain only an ultimatum (e.g.\ ``agree'' or ``disagree'') without learning any internal details of the invention~$\omega$. This mirrors the TEE's role in preventing expropriation: the invention or its properties are never revealed in the clear. And these assurance would be stronger as they would be provable by math rather than hardware design and incentives, as TEEs are. While iO doesn't produce "stateful" execution, it could be combined with blockchains to replicate such properties.

That said, iO remains largely \emph{theoretical} today, requiring complex hardness assumptions and lacking widely accepted, efficient constructions. This makes it a less practical choice compared to TEEs (which are commercially available commodities), but it remains conceptually interesting for future research.

\subsection*{Summary}
In short, while each of these cryptographic primitives can accomplish some version of “secure disclosure,” their off-the-shelf usage often either imposes costs on the buyer and/or leaks valuable partial information (ZKP, FHE) or introduces new trust assumptions (threshold schemes) 
or relies on still mostly theoretical constructions (iO). 
By contrast, \emph{trusted execution environments} (TEEs) combine 
(a) hardware-enforced security guarantees with 
(b) market availability (e.g.\ Intel SGX) and 
(c) straightforward enforcement of “all-or-nothing” payoffs. 
Thus, in our model, TEEs are a conceptually and practically simple device 
for guaranteeing full disclosure without expropriation or unintended leakage.

\section{Strategic Subtleties}\label{app:theta}
Below follow remarks on some aspects of the model which are second order for our main results but nevertheless merit some discussion.

\subsection{Remarks on Errors in the Surplus Split}\label{app:surplus}

A distinct issue is how \emph{split} errors could affect payoffs \emph{conditional on} successful trade. That is, agents could achieve a feasible (and possibly full) surplus but still err in \emph{how} they split that surplus (e.g.\ paying $\widetilde{P}\neq P^*(\hat{\omega})$ inside the feasible region). But such deviations matter only once the acceptance constraints are satisfied. That is, both parties have been made better off in expectation. Thus \emph{split} errors are second‐order in determining whether the TEE solution strictly dominates the baseline. Rather than treat them in more detail, we can simply note the fact that each player would tolerate these errors but would of course prefer them to be as small as possible.

\subsection{Budget Constraints as commitment}
One might observe that the buyer's agent is given a budget and, under some special conditions, that this could change the bargaining conditions within the TEE. Strictly speaking, a budget constraint should just act to reduce the overall surplus in Nash Bargaining, making both players worse off. However it can be the case that a strong budget constraint could act as a credible commitment to capture the surplus of some portion of the $S_A$ with an outside option ($\alpha_0 \omega$) less than the budget. The intuition here is that under the Nash Bargaining solution, sellers should capture a positive portion of the surplus above their outside option. A credible commitment to only pay less than that surplus (but still more than the outside option) would be profitable to these agents. For the commitment to be truly credible, however, this budget constraint would mean other agents that have $\alpha_0 \omega$ larger than the budget constraint could not paid at all, leaving the buyer with $0$. This will of course depend on the (exogenous) value $\alpha_0$, since it governs the Nash Bargaining solution and determines how much of the surplus there is to appropriate. It will also depend on some detailed characteristics of the distribution $F(\omega)$. Effectively, given some $\alpha_0$ and $F(\omega)$, the buyer would seek to identify whether there is a point such that the surplus extraction region of the derived distribution first order stochastically dominates the opportunity cost region.

While it might be interesting to characterize the distributions for which this strategy might or might not hold, there are other reasons it might not be tenable. For one, if we assume revelation and value of the information good is at least partially divisible, then sellers could respond to this credible commitment by ``un-revealing'' the information. That is, since they can be thought of as bargaining over the value transmitted to the players, the seller could just respond by proposing a Nash split of less than the value of the good. This would unravel the commitment and the buyer would be no better off. One could also require mutual inspection of the budget beforehand, or introduce a condition where the buyer must reveal the budget in the TEE or else.

\subsection{Efficient overpayment in the TEE}
\label{app:buyer-overpay-d}

In the main text, we studied the buyer’s attempt to pay 
\(\hat{p}(\omega)=\theta\,\omega\), subject to a random error \(e_b\sim\mathrm{Unif}(-2\theta,2\theta)\).
This could lead to “underpayment” (i.e.\ \(\hat{p}(\omega)+e_b < \theta\omega\)), 
thus killing the trade.  A natural idea is for the buyer to add a 
\emph{constant buffer} \(d>0\) so that the intended payment is
\[
\hat{p}(\omega) \;=\; \theta\,\omega \;+\; d.
\]
Intuitively, \(d\) should be large enough to offset typical negative errors 
\(e_b\), yet not so large that the buyer “systematically” overspends and 
destroys its expected surplus.  

\paragraph{1. Gains from the newly accepted region $e_b\in[-d,0)$.}
- When $d=0$, the interval $[-d,0)=[0,0)$ is empty, so no trade for $e_b<0$.  
- For small $d>0$, we now accept any $e_b\ge -d$.  
- Hence the measure of newly \emph{accepted} $(\omega,e_b)$ is approximately 
  \(\tfrac{d}{4\theta}\) for each $\omega$.  

- In that newly accepted strip, the buyer’s payoff is 
  \(\omega - (\theta\,\omega + d + e_b)\).  
  As $d\to0$, $e_b$ is near $-d/2$ on average, so that payoff is roughly 
  \(\omega - \bigl(\theta\,\omega + 0 + (-d/2)\bigr) 
  = \omega(1-\theta) + \tfrac{d}{2}.\)

- Averaging $\omega$ over $[0,1]$ gives $\mathbb{E}[\omega]=\tfrac12$, 
  so the \emph{typical} net gain for newly accepted trades is about 
  \(\tfrac12(1-\theta) + \tfrac{d}{2}\).  
  Multiplied by the tiny measure $\tfrac{d}{4\theta}$, this yields a 
  \textit{first‐order} term in $d$ of size 
  \[
     \Bigl(\tfrac{1-\theta}{2}\Bigr)\,\tfrac{d}{4\theta}
     \;=\;
     \tfrac{1-\theta}{8\theta}\,d
     \quad (\text{ignoring smaller }O(d^2)\text{ terms}).
  \]

\paragraph{2. Cost from paying extra $d$ in the \emph{already accepted} region $e_b\in[0,\,\theta-\theta\omega]$.}
- At $d=0$, all $e_b\ge0$ are accepted.  
- For $d>0$, \emph{whenever} $e_b \le \theta - (\theta\,\omega + d)$, 
  the actual payment is $(\theta\,\omega + d + e_b)$, i.e.\ we pay an extra $d$.  
- Near $d=0$, that region is basically $e_b \in [0,\;\theta(1-\omega)]$, exactly as before but now with an added $d$.  

- The measure of that region—integrating $e_b$ over $[0,\;\theta(1-\omega)]$ and $\omega$ over $[0,1]$—comes out to $\tfrac{1}{8}$ when $d=0$.  
  (Half the $e_b$ distribution is $[0,2\theta]$, and integrating $(1-\omega)$ from $0$ to $1$ yields $1/2$ overall, etc.)

- Thus the \emph{incremental} cost is $\tfrac{1}{8}\,d$ to first order, because we are effectively paying $d$ extra in that entire portion of the state space.

\paragraph{Net derivative at $d=0$.}
Putting these two together,
\[
  \left.\frac{\partial \Pi_B(\theta,d)}{\partial d}\right|_{d=0}
  \;=\;
  \underbrace{\tfrac{1-\theta}{8\theta}}_{\text{(newly accepted gain)}}
  \;-\;
  \underbrace{\tfrac{1}{8}}_{\text{(extra cost)}}
  \;=\;
  \frac{1-\theta}{8\theta} \;-\; \frac{\theta}{8\theta}
  \;=\;
  \frac{\,1 - 2\theta\,}{8\theta}.
\]
Hence:
\[
  \left.\frac{\partial \Pi_B(\theta,d)}{\partial d}\right|_{d=0} 
  > 0
  \quad\Longleftrightarrow\quad
  \theta < \tfrac{1}{2}.
\]

\section{Derived equilibrium under agent errors}
\label{app:agent-errors}
Two natural aspects our setting serve to blunt the negative impact of agent errors: 
\begin{enumerate}
    \item In order for exchange to take place, the buyer's agent must be endowed with a budget. This budget substantially reduces overpayment risk even while remaining sufficient to pay for the highest value disclosures.
     \item The seller's agent under-disclosing decreases surplus, but does not affect the seller's willingness to use the TEE.
\end{enumerate}

Our key finding is that the TEE mechanism is surprisingly robust, preserving most of the efficiency gains even with substantially flawed agents. Viewed more broadly, this suggests that \emph{even today}, with imperfect AI systems, the secure environment plus a bounded budget can achieve the “perfect NDA” benefits. 

We now relax the assumption that agents are maximally congruent and 
introduce \emph{agent errors}. Specifically, let 
$\varepsilon_S$ and $\varepsilon_B$ denote the respective errors made by 
the seller's agent ($A_S$) and the buyer's agent ($A_B$). These errors 
can be large and may have broad supports, but for simplicity we assume 
they have mean zero.\footnote{Any systematic bias can be offset by 
appropriately adjusting parameters; 
e.g.\ if $A_B$ systematically overpays, one can instruct it to ``shade down'' 
its payment accordingly.} 

Because \emph{only} those errors that occur at the \textit{threshold} 
between accepting an offer or reverting to the baseline no-disclosure 
equilibrium actually affect equilibrium outcomes, we focus on 
\emph{disclosure errors} for $A_S$ and \emph{payment errors} for $A_B$. 
Intuitively, an error that pushes the buyer’s offer below the seller’s 
minimum acceptance level (or vice versa) leads to breakdown, eliminating 
gains from trade. We show that the TEE-based mechanism remains robust 
against moderate errors: the budget cap truncates extreme overpayments, 
while any underpayment is rejected. Consequently, both parties still 
prefer delegating to TEE agents ex ante, and the inefficiency caused by 
errors is generally well-contained.

\subsection{Seller’s Disclosure Error}
\label{sec:seller-error}

The seller’s agent $A_S$ 
can disclose any internal value 
\(\tilde{\omega} \le \overline{\omega}_{\text{agent}} \le \omega\), 
where $\overline{\omega}_{\text{agent}}$ is chosen by the seller 
outside the TEE. Suppose $A_S$ suffers an\emph{execution error} 
$\varepsilon_S$, so that in practice
\[
  \tilde{\omega} \;=\; \omega \;+\; \varepsilon_S.
\]

But in practice, errors in disclosure cannot be positive.\footnote{Being type $\omega$ means you can disclose, at most $\omega$. A player (or agent) logically cannot reveal more information value than it possesses.}. Nor can agents disclose a negative value, which would imply they somehow reduce the ex ante information of their counterparty.\footnote{Even if possible they would be required to pay for this reduction under Nash bargaining, and so the fact that $A_S$ has no endowed budget would mean the buyer rejects the deal and the game reverts to baseline payoffs.} This ensures that the seller's agent can only underdisclose ($0 \le \varepsilon_S \le \omega$), which remains strictly incentive-compatible for the seller in the TEE framework. As such, these disclosure errors do not influence the seller’s ex ante decision to enter the TEE and are thus inconsequential for our main results.

As a result, $\varepsilon_S$ can be interpreted as “underdisclosure”. However, because expropriation is 
ruled out inside the TEE, any deliberate underdisclosure 
($\tilde{\omega} < \omega$) only reduces the total surplus and thus the 
seller’s bargaining share. Hence no rational seller 
\emph{intentionally} sets $\overline{\omega}_{\text{agent}} < \omega$ 
or withholds information from $A_S$. 

Nor does $A_S$ create gains from strategic withholding. If $A_S$ underreports by $\delta>0$ \emph{on purpose}, 
the seller’s outside option remains $\alpha_0\,\omega$, but the \emph{bargaining surplus} is now based on $\omega-\delta < \omega$, strictly decreasing the seller’s payoff. As a result, 
\[
  \tilde{\omega} = \omega - \varepsilon_S
  \quad\text{with}\quad \varepsilon_S \ge 0
  \quad\Longrightarrow\quad \text{(seller’s error only, no strategic underreporting).}
\]
Any partial disclosure thus arises from agent imprecision, 
not from expropriation concerns.  

\paragraph{Conclusion.}
In equilibrium, $A_S$ receives the entire $\omega$ from the seller, 
and discloses $\tilde{\omega} \approx \omega$ to the buyer’s agent 
(except for random nonstrategic underdisclosure). We therefore treat $\omega$ as the 
relevant disclosure level in subsequent analyses, focusing on the 
buyer’s  errors next.

\subsection{Buyer’s Overpayment Error and Incentive Compatibility}

As before, let $\omega \sim \mathrm{U}(0,1)$ denote the type and value for the seller (known to $A_B$ based on disclosure). And again we have the buyer’s agent $A_B$, which aims to pay \(\hat{p}(\omega)=\theta\,\omega\) but is now subject to a random overpayment error $e_b \sim U[-E_b,E_b]$. 

We impose a "minimal separation" condition $E_b<2\theta$. This ensures that parameters remain in a range where the incentive constraints do not fully pool the highest type with the lowest type in expectation. Any larger errors would suggest that the buyer could expect to profitably pay the same amount to all types--these would be degenerate cases in which much simpler mechanisms would suffice.\footnote{Paying every seller the highest type's value conditional on disclosure would be fully incentive compatible.}

\paragraph{Seller’s acceptance threshold.}
As before, the seller requires at least \(\theta\,\omega\).  Because
\(\hat{p}(\omega,e_b) \ge \theta\,\omega\) is equivalent to \(e_b \ge 0\), 
\emph{negative draws always kill the trade.}  
Hence trading occurs only when \(e_b \in [0,\,E_b]\).

\paragraph{Ex‐post payoffs.}
When trade occurs ($e_b \ge 0$), the buyer’s realized payment is 
\[
  p(\omega,e_b) 
  \;=\;
  \min\{\,\theta\,\omega + e_b,\;\theta\}.
\]
Thus:
\[
  u_B(\omega,e_b)
  \;=\;
  \begin{cases}
    \omega - \bigl(\theta\,\omega + e_b\bigr), 
      & \text{if } 0 \le e_b \le \theta - \theta\,\omega, 
        \;\text{(\emph{under‐budget regime})}\\[6pt]
    \omega - \theta, 
      & \text{if } \theta - \theta\,\omega < e_b \le E_b, 
        \;\text{(\emph{budget‐constraint regime})}\\[4pt]
    0, 
      & \text{if } e_b < 0 
        \;\text{(\emph{no trade})}.
  \end{cases}
\]

\subsection*{Buyer’s Ex-Post Payoff and Regions}
For a given \(\omega\) and error draw \(e_b\), define 
\[
p(\omega,e_b) \;=\; \min\{\,\theta\,\omega + e_b,\;\theta\}.
\]
Then the ex-post payoff for the buyer is
\[
u_B(\omega,e_b)
\;=\;
\begin{cases}
\omega - (\theta\,\omega + e_b), 
& \text{if } 0 \le e_b \le \theta - \theta\,\omega 
   \quad\text{(\emph{under-budget region})},\\[6pt]
\omega - \theta, 
& \text{if } \theta - \theta\,\omega < e_b \le 2\theta 
   \quad\text{(\emph{budget-capped region})},\\[4pt]
0, 
& \text{if } e_b < 0 
   \quad\text{(\emph{no-trade region})}.
\end{cases}
\]
Since \(e_b\) is uniform on \(\bigl[-2\theta,\,2\theta\bigr]\), each region has PDF weight \(\tfrac{1}{4\theta}\). The buyer’s overall ex-ante payoff is
\[
\Pi_B(\theta)
\;=\;
\int_0^1 \int_{-2\theta}^{2\theta}
   u_B(\omega,e_b)\,\frac{1}{4\theta}\,d e_b\,d\omega.
\]

\subsection*{Partitioning the Integration and Computing}
\paragraph{1. No-Trade Region (\(e_b<0\)):}
No trade occurs, so \(u_B(\omega,e_b)=0\). Hence no contribution to the integral.

\paragraph{2. Under-Budget Region (\(0 \le e_b \le \theta - \theta\,\omega\)):}
Here the buyer pays \(\theta\,\omega + e_b\) and thus obtains
\[
u_B(\omega,e_b)
\;=\;
\omega - (\theta\,\omega + e_b).
\]
We integrate over \(e_b\) from \(0\) to \(\theta(1-\omega)\):
\[
\int_{0}^{\theta(1-\omega)}
   \bigl[\omega - (\theta\,\omega + e_b)\bigr]
\,\frac{1}{4\theta}\,d e_b
\;=\;
\frac{1}{4\theta}
\int_{0}^{\theta(1-\omega)}
   \bigl[\omega - \theta\,\omega - e_b\bigr]
\,d e_b.
\]
Carrying out the integral:
\[
=\;
\frac{1}{4\theta}
\left[
   \omega\,e_b
   \;-\;\theta\omega\,e_b
   \;-\;\tfrac{e_b^2}{2}
\right]_{0}^{\,\theta(1-\omega)}
\;=\;
\frac{1}{4\theta}
\left[
  \theta\,\omega(1-\omega)
  - \theta^2\,\omega(1-\omega)
  - \tfrac{\theta^2(1-\omega)^2}{2}
\right].
\]
Factor out \(\theta\):
\[
=\;
\frac{\omega(1-\omega)}{4}
\;-\;
\frac{\theta\,\omega(1-\omega)}{4}
\;-\;
\frac{\theta\,(1-\omega)^2}{8}.
\]

\paragraph{3. Budget-Capped Region (\(\theta(1-\omega) < e_b \le 2\theta\)):}
Now the buyer hits the budget cap \(\theta\), so 
\[
u_B(\omega,e_b) \;=\; \omega - \theta.
\]
Integrate \(e_b\) from \(\theta(1-\omega)\) to \(2\theta\):
\[
\int_{\theta(1-\omega)}^{2\theta}
   [\,\omega - \theta\,]
\,\frac{1}{4\theta}\,d e_b
\;=\;
\frac{\omega - \theta}{4\theta}
\,\Bigl[
   2\theta - \theta(1-\omega)
\Bigr].
\]
Note \(2\theta - \theta(1-\omega)=\theta + \theta\omega\). Hence
\[
=\;
\frac{\omega - \theta}{4\theta}
\,\theta\,\bigl(1+\omega\bigr)
\;=\;
\frac{(\omega - \theta)(1 + \omega)}{4}.
\]

\subsection*{Summing and Integrating over \(\omega\)}
Denote the sum of the under-budget and budget-capped integrands by \(f(\omega,\theta)\). Then
\[
\Pi_B(\theta)
\;=\;
\int_0^1 
   \Bigl[\text{under-budget payoff} + \text{budget-capped payoff}\Bigr]
\,d\omega
\;=\;
\int_{0}^{1} f(\omega,\theta)\,d\omega.
\]
A straightforward (if tedious) computation shows 
\[
\int_0^1 f(\omega,\theta)\,d\omega
\;=\;
\frac{6 - 11\,\theta}{24}.
\]
Thus the buyer’s expected payoff is
\[
\Pi_B(\theta) 
\;=\;
\frac{6-11\,\theta}{24}.
\]
A final check for when $\theta=0$:
\[
\Pi_B(0)
\;=\;
\frac{6 - 0}{24}
\;=\;
\frac{1}{4},
\]
This matches the correct outcome for \(\theta=0\), in which the buyer effectively pays 0 for an invention whose average value is \(\tfrac{1}{2}\), but since (under buyer error) trade only happens half the time the payoff is \(\tfrac{1}{4}\).

\begin{proposition}[Buyer’s Payoff Under Minimal Separation]
\label{prop:buyer-min-sep}
Under $\theta \in (0,1)$ and $e_b \sim \mathrm{Unif}(-2\theta,\,2\theta)$, 
the buyer’s \emph{ex‐ante payoff} from attempting to pay 
$\hat{p}(\omega) = \theta\,\omega$ is
\begin{equation}
\label{eq:PiB-min-sep}
  \Pi_B(\theta)
  \;=\;
  \int_0^1 \!\!\int_{-2\theta}^{2\theta}
    u_B(\omega,e_b)\,\frac{1}{4\theta}
  \;d e_b\,d\omega
  \;=\;
  \frac{6 \;-\; 11\,\theta}{24}.
\end{equation}
In particular, $\Pi_B(\theta)$ is strictly decreasing in $\theta$, 
and remains positive for all $\theta < \tfrac{6}{11}$.
\end{proposition}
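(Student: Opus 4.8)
The plan is to evaluate the double integral directly by exploiting the piecewise structure of $u_B(\omega,e_b)$, which partitions the rectangle $[0,1]\times[-2\theta,2\theta]$ into three horizontal strips. First I would dispose of the no-trade region: for $e_b<0$ the intended payment $\theta\omega+e_b$ falls below the seller's acceptance threshold $\theta\omega$, so no trade occurs and $u_B=0$, contributing nothing. This reduces the analysis to $e_b\in[0,2\theta]$, and I would confirm that the separating boundary $e_b=\theta(1-\omega)$ lies in $[0,2\theta]$ for every $\omega\in[0,1]$, so that the under-budget strip $0\le e_b\le\theta(1-\omega)$ and the budget-capped strip $\theta(1-\omega)<e_b\le 2\theta$ genuinely tile the trading region.

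Next I would compute the two inner integrals over $e_b$ against the uniform density $\tfrac{1}{4\theta}$. On the under-budget strip the integrand $\omega-\theta\omega-e_b$ is affine in $e_b$, so the inner integral is elementary and yields, after factoring out $\theta$, the expression $\tfrac{\omega(1-\omega)}{4}-\tfrac{\theta\omega(1-\omega)}{4}-\tfrac{\theta(1-\omega)^2}{8}$. On the budget-capped strip the payoff is the constant $\omega-\theta$, so its inner integral is $(\omega-\theta)$ times the strip width $\theta(1+\omega)$, divided by $4\theta$, giving $\tfrac{(\omega-\theta)(1+\omega)}{4}$. Summing these defines $f(\omega,\theta)$.

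Then I would integrate $f(\omega,\theta)$ over $\omega\in[0,1]$ using the moments $\int_0^1\omega\,d\omega=\tfrac12$, $\int_0^1\omega^2\,d\omega=\tfrac13$, and $\int_0^1(1-\omega)^2\,d\omega=\tfrac13$. The under-budget contribution collapses to $\tfrac{1-2\theta}{24}$ and the budget-capped contribution to $\tfrac{5-9\theta}{24}$; adding them gives exactly $\tfrac{6-11\theta}{24}$. The two comparative-statics claims follow immediately from this closed form: differentiating gives $\Pi_B'(\theta)=-\tfrac{11}{24}<0$, establishing strict monotonic decrease, while $\tfrac{6-11\theta}{24}>0\iff\theta<\tfrac{6}{11}$. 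I would also record the limit $\Pi_B(\theta)\to\tfrac14$ as $\theta\to0$ as a consistency check against the free-invention benchmark, where trade at a near-zero price occurs half the time on an average value of $\tfrac12$.

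The computation is mechanical, so the real care points are bookkeeping rather than a single deep obstacle. I expect the most error-prone steps to be pinning down the inner-integral limits as functions of $\omega$ and verifying that the budget cap does not shift the acceptance margin: one must check that whenever $e_b\ge 0$ the seller still accepts even when the cap $\theta$ binds, since $\theta\ge\theta\omega$ for $\omega\le 1$. Once that equivalence between "$e_b\ge 0$" and "trade occurs" is confirmed, the partition is clean and the remainder is routine polynomial integration.
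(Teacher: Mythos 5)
Your proposal is correct and follows essentially the same route as the paper: partition $[0,1]\times[-2\theta,2\theta]$ into the no-trade, under-budget, and budget-capped regions, compute the inner $e_b$-integrals against the density $\tfrac{1}{4\theta}$, and integrate the resulting $f(\omega,\theta)$ over $\omega$ to obtain $\tfrac{6-11\theta}{24}$; your intermediate values $\tfrac{1-2\theta}{24}$ and $\tfrac{5-9\theta}{24}$ are exactly what the paper's ``straightforward (if tedious) computation'' produces, and your boundary checks (that $\theta(1-\omega)\le 2\theta$ and that the capped payment $\theta\ge\theta\omega$ still clears the seller's threshold) make explicit what the paper leaves implicit.
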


\paragraph{Interpretation.}
A convenient decomposition is:
\begin{equation}
\label{eq:PiB-decompose}
  \Pi_B(\theta) 
  \;=\;
  \underbrace{\frac{1-\theta}{2}}_{\text{(1) no‐error baseline}} 
  \;-\;
  \underbrace{\frac{1}{2}}_{\text{(2) lost trade from underpayment}} 
  \;+\;
  \underbrace{\Bigl(\frac{\theta+6}{24}\Bigr)}_{\text{(3) budget offset}}
  \;.
\end{equation}
Term (1) is the buyer’s payoff \emph{if no errors existed} 
(i.e.\ always paying $\theta\,\omega$).  
Term (2) reflects the fact that half of all draws ($e_b<0$) kill the trade, 
costing the buyer that baseline surplus.  
Term (3) represents the positive offset recouped when $e_b \ge 0$, 
because (i)~some payments stay below the cap, limiting overpayment, 
and (ii)~the cap itself forbids extremely large payments when $e_b$ is big.  
These partial recoveries net out to $\tfrac{\theta+6}{24}$, 
so that overall $\Pi_B(\theta)$ is strictly decreasing in $\theta$ 
but remains nonnegative for moderate $\theta$.

\paragraph{Maximum Error Threshold}
We may also solve for the \emph{largest} error amplitude 
that the buyer can tolerate while still retaining 
nonnegative surplus.  That is, the buyer’s ex‐ante payoff is zero when
\[
  \frac{6 - 11\,\theta}{24} \;=\; 0
  \quad\Longrightarrow\quad
  \theta^* \;=\; \frac{6}{11}\;\approx\;0.545.
\]
Hence the maximum half‐range of error is
\[
  E_b^* \;=\; 2\,\theta^*
  \;=\;
  \frac{12}{11}
  \;\approx\;1.09.
\]
For all \(\theta \le 6/11\) 
(i.e.\ for \(E_b \le 12/11\)), 
the buyer’s ex‐ante payoff remains strictly positive.  
This demonstrates the surprising fact that even under an unfavorable split for the buyer ($\theta>.5$) the budget cap allows them to tolerate errors slightly larger than the entire surplus value.

\subsection{Summary: Impact of Agent Errors}

Despite random overpayment errors, the budget cap 
prevents unbounded losses.  
Hence, so long as $e_b$ (or its half‐range) remains below some \emph{positive} 
threshold, the buyer’s ex‐ante payoff, $\Pi_B(\theta)$, exceeds the baseline 
payoff of $0$, making the TEE‐based delegation worthwhile.  
Meanwhile, any \emph{disclosure errors} by the seller remain bounded above 
by $\omega \in [0,1]$, so the seller still finds the positive payment 
superior to reverting to a trivial “no disclosure” regime.  

Thus, for surprisingly high error levels, the TEE arrangement remains ex ante incentive compatible for both sides. 

\begin{corollary}[Robustness to Agent Errors]
\label{cor:robust-errors}
Even with random errors in both buyer and seller agents, 
there exists a strictly positive threshold of error magnitudes 
$(E_b^{\max}, E_s^{\max})$ 
such that \emph{both} parties strictly prefer delegating 
to the TEE mechanism over reverting to the baseline.  
For sufficiently moderate errors, the mechanism remains ex ante 
incentive compatible and Pareto‐improving.
\end{corollary}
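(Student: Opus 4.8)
The plan is to establish Corollary~\ref{cor:robust-errors} by exhibiting, for each party separately, a strictly positive error threshold below which the TEE payoff dominates the baseline, and then taking the intersection. The key observation is that the two sides decouple: the seller's side is handled by the monotonicity argument already developed in Section~\ref{sec:seller-error}, while the buyer's side reduces to the closed-form payoff $\Pi_B(\theta)=\tfrac{6-11\theta}{24}$ from Proposition~\ref{prop:buyer-min-sep}. Because both thresholds are strictly positive, their minimum is strictly positive, and on that region both incentive constraints hold simultaneously, giving a Pareto improvement over the baseline $(\alpha_0\omega,0)$.

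First I would dispatch the seller. Recall from Section~\ref{sec:seller-error} that disclosure errors can only be nonnegative underdisclosures $\varepsilon_S\in[0,\omega]$, since one cannot reveal more value than one possesses. Any such underdisclosure lowers the bargained surplus from $\hat\omega=\omega$ to $\omega-\varepsilon_S$, so the seller's realized payoff is $\theta(\omega-\varepsilon_S)$, which still exceeds the baseline $\alpha_0\omega$ precisely when $\varepsilon_S$ is not too large; concretely the seller stays above baseline whenever $\theta(\omega-\varepsilon_S)\ge\alpha_0\omega$, i.e. for $\varepsilon_S$ below a positive bound determined by $\theta$ and $\alpha_0$ (note $\theta=\tfrac{1+\alpha_0}{2}>\alpha_0$, so at $\varepsilon_S=0$ the inequality is strict). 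Taking expectations over the error distribution, one obtains a strictly positive $E_s^{\max}$ such that the seller's ex ante TEE payoff exceeds baseline. The second I would dispatch the buyer: the baseline buyer payoff is $0$, and Proposition~\ref{prop:buyer-min-sep} gives $\Pi_B(\theta)>0$ for all $\theta<\tfrac{6}{11}$, equivalently for half-range $E_b=2\theta<\tfrac{12}{11}$, which directly furnishes a strictly positive $E_b^{\max}$.

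I would then combine the two: set $(E_b^{\max},E_s^{\max})$ as just derived, and observe that on $\{E_b\le E_b^{\max},\ E_s\le E_s^{\max}\}$ each party's ex ante payoff strictly exceeds its baseline, so neither has an incentive to refuse delegation at stage~(2) of the timeline. Since refusal by either party reverts both to the baseline, mutual delegation followed by secure bargaining is the unique outcome surviving the participation constraints, and it is Pareto-improving by construction.

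The main obstacle I anticipate is not the buyer's side, which is already pinned down by the clean formula in Proposition~\ref{prop:buyer-min-sep}, but rather making the seller's threshold genuinely \emph{uniform} across types $\omega$ and the interaction between the two errors. The seller's slack $\theta(\omega-\varepsilon_S)-\alpha_0\omega=(\theta-\alpha_0)\omega-\theta\varepsilon_S$ shrinks as $\omega\to0$, so a single additive bound on $\varepsilon_S$ may not suffice for the smallest types; the cleanest fix is to exploit that $\varepsilon_S\le\omega$ forces the error to scale with $\omega$, or to phrase the threshold as a bound on $\varepsilon_S/\omega$, guaranteeing strict dominance uniformly. A secondary subtlety is that when the seller underdiscloses, the buyer's payoff computation in Proposition~\ref{prop:buyer-min-sep} was carried out assuming full disclosure; I would argue that seller underdisclosure only shrinks the traded surplus and hence cannot push the buyer below its baseline of $0$ (the budget cap and the acceptance structure continue to truncate overpayments), so the two error channels do not compound destructively. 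With these points handled, the intersection argument closes the corollary.
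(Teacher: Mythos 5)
Your overall architecture---decoupling the two sides, taking the buyer's threshold from Proposition~\ref{prop:buyer-min-sep} ($\Pi_B(\theta)=\tfrac{6-11\theta}{24}>0$ for $\theta<\tfrac{6}{11}$, i.e.\ half-range $E_b<\tfrac{12}{11}$), and intersecting the two thresholds---is exactly the paper's, and your buyer-side argument coincides with the paper's. The problem is your seller-side argument. You write the seller's realized payoff under an underdisclosure $\varepsilon_S$ as $\theta(\omega-\varepsilon_S)$ and compare it to the baseline $\alpha_0\omega$, which produces a binding constraint (roughly $\varepsilon_S/\omega\le\tfrac{1-\alpha_0}{1+\alpha_0}$) and your ensuing worry about uniformity as $\omega\to 0$. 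But that payoff drops a term that is central to the model: by Equation~\eqref{eq:payoff_invest}, the seller retains the residual value $\alpha_0(\omega-\hat\omega)$ of whatever is \emph{not} disclosed, so with disclosure $\tilde\omega=\omega-\varepsilon_S$ the seller's payoff on trade is $\theta\tilde\omega+\alpha_0\varepsilon_S=\alpha_0\omega+(\theta-\alpha_0)\tilde\omega\ge\alpha_0\omega$. Equivalently, in the appendix's phrasing, the seller's outside option in the internal bargain remains $\alpha_0\omega$, because an \emph{exit} inside the TEE deletes the session and leaks nothing; the Nash solution therefore can never leave the seller below $\alpha_0\omega$. Hence in the paper's model \emph{no} seller threshold is needed at all: underdisclosure errors of any size (they are in any case bounded by $\omega$) shrink the surplus but never violate the seller's participation constraint. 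This is precisely why Section~\ref{sec:seller-error} declares such errors ``inconsequential,'' and why the paper's proof of the corollary effectively consists of the buyer-side computation alone.

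Your version still technically delivers the stated conclusion when $\alpha_0<1$, since $\theta=\tfrac{1+\alpha_0}{2}>\alpha_0$ makes your conservative threshold strictly positive; but it misstates the seller's incentives, manufactures a uniformity problem that does not exist in the model, and fails at the boundary $\alpha_0=1$, where your condition forces $\varepsilon_S=0$ while the correct accounting still gives (weak) preference for the TEE. Your secondary observation---that seller underdisclosure only shrinks the traded pie and so cannot push the buyer below its baseline of $0$---is correct, and it is essentially how the paper justifies treating $\omega$ as the relevant disclosure level when it then computes the buyer's payoff.
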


\noindent
\emph{Discussion.}\;
Corollary~\ref{cor:robust-errors} underscores that one need not assume 
\emph{flawless} AI agents to reap the gains from secure delegation.  
By ``capping'' overpayments at \(\theta\) and rejecting underpayments 
(i.e.\ $e_b<0$), both sides avoid informational hold‐up.  Larger errors may eventually erode all surplus, 
but for moderate ranges of agentic miscalculation, TEE maintains 
positive net benefits.

\section{Extended Model of TEE Security}
\label{sec:TEE-security-appendix}
\paragraph{Principles and TEE Providers}
Let there be $n$ distinct TEEs (trusted execution environments), each belonging to a different provider $i = 1,\dots,n$. To maximize security, the invention or secret with value $\omega \in \mathbb{R}_{+}$ is protected via an encryption scheme (e.g. threshold encrypted MPC with secret sharing) to ensure jailbreaks short of $k$ don't leak partial value. 

Concretely, each TEE $i$ holds an encrypted \emph{partial share} which is worthless on its own if fewer than $k$ TEEs collude. Only a subset of size $|S|\!\ge k$ can combine their shares to fully recover $\omega$.

Extending from the main text, we have:
\begin{itemize}
  \item \(\omega\) as the total value being secured.
  \item \(k\) is the collusion threshold in the \((k,n)\)-encryption scheme (we abstract from \(n\) here),
  \item \(p_k\) is the probability that a subset of size \(k\) is detected when colluding,
  \item \(C\) is the penalty for being caught (e.g.\ legal and reputational costs),
  \item \(c\) is any fixed cost for the principal of using the AI agents, TEEs, MPC, etc.
\end{itemize}

The principal’s optimization problem is:

\[
  \max_{k} \;\bigl[\omega - c\bigr]
  \quad \text{subject to the IC:}
  \quad
  (1 - p_k)\,\frac{\omega}{k}
  \;\;\le\;\;
  p_k\,C,
\]

We assume $p_S$ weakly increases in $k$, reflecting the idea that larger conspiracies are more conspicuous. A natural functional form for correlated detection is
\[
  p_k
  \;=\;
  1 \;-\;(1 - p)^{\,k^\gamma},
  \quad \text{with } \gamma>1 \text{ and } 0 < p < 1.
\]
Substituting into the IC constraint gives:
\[
  (1 - p_k)\,\frac{\omega}{k}
  \;=\;
  (1 - (\,1 - p\,)^{\,k^\gamma}) \,\frac{\omega}{k}
  \;\;\le\;\;
  p_k\,C,
\]
where
\(
  1 - p_k = (1 - p)^{\,k^\gamma}.
\)
Rewriting:

\[
  \omega
  \;\;\le\;\;
  \frac{k\;\bigl(1 - (1-p)^{\,k^\gamma}\bigr)}{(1-p)^{k^\gamma}}\;C.
\]
This bound on \(\omega\) ensures that a subset of size \(k\) finds collusion
unprofitable.  Because detection typically \emph{increases} with bigger subsets,
any group larger than \(k\) faces even higher detection probability, so
deterring size~\(k\) effectively deters \(\ell>k\) as well.

\paragraph{Remark on the exponent \(\gamma\).}
Under \(\gamma = 1\), 
\(
  p_S = 1 - (1-p)^{|S|}
\),
this describes a situation where the detection of each TEE jailbreak is an independent event, with 
collusion adding no new source of scrutiny.
\footnote{One might say this is ``perfect collusion'' in that 
 the conspirators do not increase each other's detection probability; 
 each is just as likely to be caught as if acting alone.}
By contrast, when \(\gamma>1\), the detection probability 
rises \emph{faster} than the independent case, modeling the idea 
that conspiracies involving multiple TEE provders are more easily spotted 
(e.g.\ more “moving parts,” higher chance someone leaks, etc.).

\subsection{Back-of-the-Envelope Estimate for the Max Securable Value.}
Consider five cloud TEE providers and a \((3,5)\) threshold scheme.  Suppose a 
TEE breach is independently detected with baseline probability \(p=0.005\), and collusion 
amplifies detection via \(\gamma=2\).  Then for a conspiratorial set \(S\) of size 
\(\lvert S\rvert=3\),

\[
  p_S 
  \;=\;
  1 \;-\;(1 - p)^{\,|S|^\gamma}
  \;=\;
  1 \;-\;
  (1 - 0.005)^{3^2}
  \;=\;
  1 \;-\;
  0.995^{\,9}
  \;\approx\;
  0.0441.
\]

To deter expropriation, the (shared) reward for collusion must be less than the cost of being caught scaled by its probability:
\[
  (1 - 0.0441)\,\frac{\omega}{3}
  \;\le\;
  0.0441\,C
  \quad\Longrightarrow\quad
  \omega 
  \;\;\le\;
  0.138\,C
  \;.
\]

\paragraph{Estimating $C$}
Since TEEs services essentially sell confidentiality, an extractive jailbreak would likely incur significant revenue loss for this service. For simplicity, we assume that \emph{a detected breach implies a full loss of discounted future cash flows to the TEE service line}. One can also suppose further costs from legal sanction and reputational spillover to other provided services combine to make is a passable estimate.

Using available information we now make a plausible estimate for the $NPV$ of TEE services. The future market size for TEE-based cloud computing services is estimated at \$15+\,billion in annual revenues.\citep{grandview-confidential} With a benchmark profit margin of 25\% and a discount rate of 10\%, we can treat this as a perpetual stream, resulting in a roughly estimated present discounted value of
\[
  \text{NPV} \;=\;
  \frac{(15\,\text{bn}) \times 0.25}{0.10}
  \;=\;
  37.5\,\text{bn}.
\]
Averaging the cost over our (assumed) 5 participants, gives a (\(\sim\!\!20\%\) share), resulting in $C \approx 7.5\text{bn}$. This gives a maximum secret value of:
\[
\omega^{Secure} \approx 1.03 \text{bn}
\]

This suggests that, under our somewhat ad hoc assumptions, the extractable value than could be secured against a one-time breach could not exceed about a billion dollars. 

\bigskip

\bibliographystyle{plainnat}

\end{document}